\newcommand{\be}{\begin{eqnarray} \begin{aligned}}
\newcommand{\ee}{\end{aligned} \end{eqnarray} }
\newcommand{\benn}{\begin{eqnarray*} \begin{aligned}}
\newcommand{\eenn}{\end{aligned} \end{eqnarray*} }
\newcommand{\bc}{\begin{center}}
\newcommand{\ec}{\end{center}}
\newcommand{\id}{\mathbb{I}}
\newcommand{\tr}{\mathop{\mathrm{tr}}\nolimits}
\newcommand{\hil}{\mathcal{H}}
\newtheorem{lemmaApp}{Lemma A.}
\newtheorem{theoremAppB}{Theorem B.}
\newtheorem{corollaryApp}{Corollary A.}
\newtheorem{theoremApp}{Theorem A.}
\newtheorem{lemmaAppC}{Lemma C.}
\def\id{\mathbb{I}}
\def\01{\{0,1\}}
\newcommand{\eps}{\varepsilon}
\newcommand{\proj}[1]{|#1\rangle\langle#1|}
\newcommand{\ketbra}[2]{|#1\rangle\langle#2|}
\newcommand{\SH}{{\rm H}}
\newcommand{\hmin}{{\rm H_{\rm min}}}
\newcommand{\halpha}{{\rm H_{\alpha}}}
\newcommand{\hmineps}{{\rm H}^\eps_{\rm min}}
\begin{document}

\title{A min-entropy uncertainty relation for finite size cryptography}
\author{Nelly Ng Huei Ying}
\email[]{nell0002@e.ntu.edu.sg}
\affiliation{Centre for Quantum Technologies, National University of Singapore, 3 Science Drive 2, Singapore 117543}
\affiliation{School of Physical and Mathematical Sciences, Nanyang Technological University, 21 Nanyang Link, Singapore 637371}
\author{Mario \surname{Berta}}
\email[]{berta@phys.ethz.ch}
\affiliation{Institute for Theoretical Physics, ETH Zurich, 8093 Zurich, Switzerland}
\author{Stephanie Wehner}
\email[]{wehner@nus.edu.sg}
\affiliation{Centre for Quantum Technologies, National University of Singapore, 3 Science Drive 2, Singapore 117543}
\date{\today}

\begin{abstract}
Apart from their foundational significance, entropic uncertainty relations play a central role in proving the security
of quantum cryptographic protocols. Of particular interest are thereby relations in terms of the smooth
min-entropy for BB84 and six-state encodings. Previously, strong uncertainty relations were obtained which are valid
in the limit of large block lengths. Here, we prove a new uncertainty relation in terms of the smooth min-entropy that is
only marginally less strong, but has the crucial property that it can be applied to rather small block lengths. This paves the way 
for a practical implementation of many cryptographic protocols. As part of our proof we show tight uncertainty relations for a family of R{\'e}nyi entropies that may be of independent interest. 
\end{abstract}

\maketitle

Entropic uncertainty relations form a modern way to characterize the uncertainty inherent in several quantum measurements. As opposed to more traditional methods of capturing the notion of uncertainty, they have the advantage that they are able to quantify uncertainty solely in terms of the measurements we consider, and are independent of the state to be measured. To see this clearly, let us explain the notion of entropic uncertainty in more detail (also, see~\cite{ww:ursurvey} for a survey). Suppose we are given a state $\rho$ on which we can make one of $L$ possible measurements with outcomes labelled $x \in \mathcal{X}$. Let $p_{x|\rho,\theta}$ denote the probability of observing outcome $x$ when making the measurement labelled $\theta$ on the state $\rho$. For each measurement, we can consider some form of entropy of the outcome distribution such as for example the Shannon entropy $\SH(X|\Theta=\theta) = - \sum_x p_{x|\rho,\theta} \log_2 p_{x|\rho,\theta}$.
An entropic uncertainty relation in terms of the Shannon entropy is then determined by the average ($p_\theta = 1/L$) over the individual entropies. More precisely, such a relation states that for \emph{all} states $\rho$

\begin{align}\label{eq:shannonur}
\frac{1}{L} \sum_{\theta} \SH(X|\Theta=\theta) = \SH(X|\Theta) \geq c\ ,
\end{align}

where $c$ is a constant that depends solely on the measurements. For example, if $\rho$ is a single qubit state, and we consider $L=2$ measurements in the Pauli $\sigma_X$ and $\sigma_Z$ eigenbases, we have $c=\frac{1}{2}$ \cite{Maassen88}. To see why~\eqref{eq:shannonur} for $c > 0$ is indeed connected with uncertainty, note that if the outcome is certain with respect to some measurement $\theta$ on the state $\rho$ ($\SH(X|\Theta = \theta) = 0$), then the outcome of at least one other measurement $\theta' \neq \theta$ is uncertain ($\SH(X|\Theta = \theta') > 0$). Similarly, the larger the value of $c$, the more uncertain these outcomes are. The value of $c$ thus give a natural measure of the incompatibility of different sets of measurements. \emph{Strong} uncertainty relations have the property that $c$ is large. 

From a cryptographic perspective, uncertainty relations in terms of the \emph{min-entropy} $\hmin(X|\Theta=\theta) = - \log \max_x p_{x|\rho,\theta}$ \footnote{We use $\log$ throughout the paper as the base 2 logarithm, unless otherwise stated.} are of particular interest, since the min-entropy determines how many random bits (key) can be extracted from $X$~\cite{renato:operational}. 
In a cryptographic setting, it is thereby often interesting to consider a slight extension of the notion of uncertainty relations above. Namely, instead of measuring one state $\rho$, we imagine that an adversary prepares with some probability $p_k$ a state $\rho_k$ (labelled by some classical label $K=k$) which we subsequently measure. Since entropic uncertainty relations hold for any state, they do in particular hold for any state $\rho_k$ that the adversary may have prepared. Yet, the distribution $\{p_{x|k\theta}\}$ over measurement outcomes may of course depend on $k$. 
Uncertainty relations with respect to such classical side information $K$ thus take the form
\begin{align}
\hmin(X|\Theta K) \geq c'\ ,
\end{align}
for some constant $c'$ depending on the measurements we make. Averaging over bases $\Theta$ and classical information K, the conditional min-entropy is given by (see appendix)
\begin{align}
\hmin (X|\Theta K) = -\log \sum_\theta p_\theta \sum_k p_{k|\theta} \max_x ~ p_{x|k\theta}\ .
\end{align}
For example, imagine that $\rho$ is an $n$-qubit state and we perform one of the $2^n$ possible measurements given by measuring each qubit independently in one of the two BB84 bases~\cite{bb84}, i.e., in the eigenbasis of Pauli $\sigma_x$ or $\sigma_z$. It is known that in this case $c' = -n\cdot\log(1/2 + 1/(2\sqrt{2}))\approx n\cdot0.22$ for any $K$. This is also optimal as there exists a state that attains this lower bound. 

Measurements in BB84 bases are indeed common in many quantum cryptographic protocols. In particular, they are used in two-party cryptographic protocols in the bounded~\cite{serge:bounded,serge:new} and noisy-storage model~\cite{Noisy1,noisy:robust, noisy:new}. These models allow for the secure implementation of any two-party cryptographic primitive under the assumption that the adversary's quantum memory device is bounded and imperfect. This includes interesting primitives such as oblivious transfer, bit commitment, and even secure identification of e.g.~a user to an ATM machine. The security of all protocols in this model crucially rests on the existence of uncertainty relations in terms of min-entropy~\cite{serge:bounded,serge:new,secureid,Noisy1,noisy:robust,noisy:new,qcextract}. Yet, the value of $c' \approx n\cdot0.22$ for BB84 bases is usually too low to be cryptographically useful. In particular, a low value for $c'$ means that the adversary's memory must be very limited and/or noisy for security to be possible~\cite{serge:bounded,serge:new,noisy:new} at all. Furthermore, a low value of $c'$ means that any experiment implementing such protocols can tolerate only a small amount of bit flip errors and losses~\cite{noisy:robust, Curty10,secureid:practical}. For instance, if $p_{\rm err}$ is the bit flip error on the channel connecting Alice and Bob, then security for the cryptographic primitive known as oblivious transfer is possible if $c' - h(p_{\rm err}) > 0$~\cite{chris:diss,Curty10}, where $h(p) = - p \log_2 p - (1-p) \log_2 (1-p)$ is the binary Shannon entropy. 

Motivated by this need to obtain a strong uncertainty relation for BB84 bases, that is, a large $c'$, the authors of~\cite{serge:new} considered the so-called \emph{smooth} min-entropy $\hmineps(X|\Theta K)$. Intuitively, a lower bound $c'$ on this quantity tells us that we do indeed have min-entropy at least $c'$, except for some small error parameter $\eps>0$. Formally, this quantity is defined as (see appendix)
\begin{align}
\hmineps(X|\Theta K)_\rho = \sup_{\rho'}~\hmin(X|\Theta K)_{\rho'}\ ,
\end{align}
where $\rho'$ is $\epsilon$-close to $\rho$ in terms of the purified distance~\cite{Tomamichel09}.

It turns out that at the expense of such a small error $\eps$, a much stronger uncertainty relation can indeed be obtained. In particular, it has been shown~\cite{serge:new} that for measurements in the BB84 bases and any $\delta\in(0,\frac{1}{2}]$,
\begin{align}\label{eq:bb84old}
	\hmineps(X|\Theta K) \geq n\cdot\left(\frac{1}{2} - \delta\right)\ ,
\end{align}
where
\begin{align}\label{oldeps}
\eps = \exp \left[-\frac{\delta^2n}{128(2+\log\frac{2}{\delta})^2}\right]\ .
\end{align}
Using this relation in a cryptographic protocol only yields an additional error $\eps$ in the overall security error, and it is widely employed in the protocols of~\cite{serge:new,secureid,chris:diss,noisy:new,Curty10,secureid:practical}. 

From a theoretical (asymptotic) viewpoint, this uncertainty relation is certainly sufficient. Yet, when it comes to putting any of such protocols into a practical experiment it has a small caveat: whereas $\eps$ decreases exponentially in the number of qubits $n$, for a large amount of uncertainty, i.e., $c'=1/2-\delta\approx1/2$, the convergence is extremely slow. For example, for $\delta = 0.0106$~\cite{secureid:practical} corresponding to $c' =0.4894$, we need $n \geq 2.39 \times 10^8$ to even have $\eps = 0.1$! In an experiment using weak coherent pulses, with frequency of 1GHz and Poisson parameter $\mu=1$ it takes approximately $2.5$ seconds to generate such an $n$~\cite{secureid:practical} if there are absolutely no losses of any kind. However, compared to the generation time, a more significant inconvenience is that the classical post-processing of such large block lengths is time-consuming. 

%%%%%%%%%%%%%%%%%%%%%%%%%%%%%%%%%%%%%%%%%%%%

\section{Results}

To implement aforementioned protocols, it would thus be desirable to have a relation that is useful for significantly smaller values of $n$. Here, we prove such a relation that makes a statement for any desirable \emph{fixed} error $\eps>0$. In particular, we show that for any $n$ qubit quantum state $\rho$ and measurements in BB84 bases
\begin{align}
	\hmineps(X|\Theta K) \geq n\cdot c_{BB84}\ ,
\end{align}
where
\begin{align}\label{eq:bb84result}
	c_{BB84} := \max_{s\in(0,1]} \frac{1}{s} \left[1+s-\log(1+2^s) \right] - \frac{1}{sn} \log \frac{2}{\epsilon^2}\ .
\end{align}

At the first glance, it may be hard to see that $c_{BB84}$ is indeed large. However, applying it to the example from~\cite{secureid:practical} (see above) by plugging in $s=0.1$ demonstrates that for the same $\eps = 0.1$, $c_{BB84} \geq 0.4894$ whenever $n\geq 2.36\times 10^4$. Comparing this with calculations in the previous section, the required block length $n$ is approximately $10^{-4}$~times smaller. Figure~\ref{fig:plot} provides a comparison of these two bounds. We see that even for large $\epsilon$, the required bound on the block length $n$ given by~\eqref{oldeps} is large.

\begin{figure}[h!]\label{fig:plot}
\includegraphics[scale=0.49]{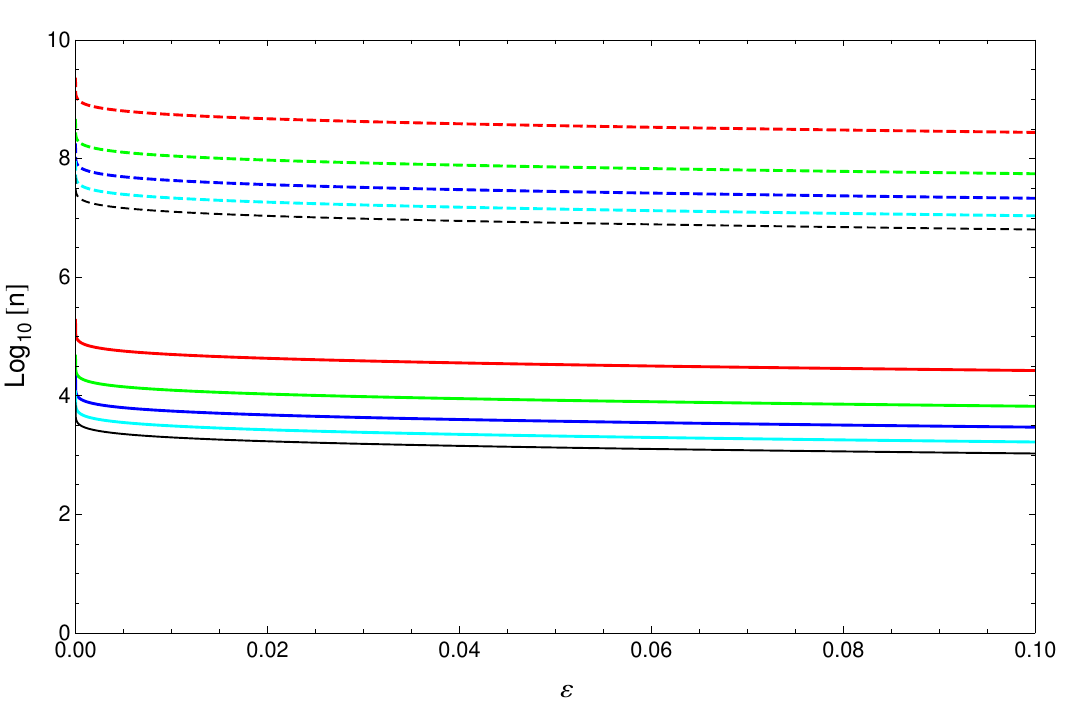}
\caption{(Color online) This plot shows the minimal required block length $n$ on a logarithmic scale of base $10$, in order to achieve an error parameter $\epsilon$. The dashed curves are plotted for the previous known bound~\eqref{oldeps}, while the solid lines are obtained from our new analysis~\eqref{eq:bb84result}. The different colors represent the fixed values of the lower bound $c'$, with values 0.45, 0.46, 0.47, 0.48, and 0.49 respectively. As $c'$ increases, the plotted bounds get relatively higher.}
\end{figure}

Our relation can readily be applied to any BB84 based two-party protocols in the bounded (or noisy)-storage model, and enables experiments for significantly smaller values of $n$. For example, it enables the experimental implementation of~\cite{noisyimpl} with $n = 2.5\times 10^5$ instead of $n > 10^{9}$ for the same error parameter $\eps$.

Furthermore our relation can be extended to the case of six-state protocols, i.e., measurements in Pauli $\sigma_x$, $\sigma_z$ and $\sigma_y$ eigenbases as suggested in~\cite{chris:diss,secureid,qcextract}. For this case we obtain 
\begin{align}
	\hmineps(X|\Theta K) &\geq n\cdot c_6\ ,
\end{align}
where
\begin{align}
	c_6 := \max_{s\in(0,1]} &~-\frac{1}{s} \log\left[  \frac{1}{3} \left(1+2^{1-s}\right) \right] - \frac{1}{sn} \log \frac{2}{\epsilon^2}\ .
\end{align}
This yields a similar improvement over the relation analogous to~\eqref{eq:bb84old} proven in~\cite{serge:new}.

A crucial step in our proof is to show \emph{tight} uncertainty relations for conditional R{\'e}nyi entropies of order $\alpha$, denoted by $\halpha(A|B)$. These may be of independent interest. Previously, such relations were only known for single qudit measurements for $\alpha\rightarrow 1$, $\alpha = 2$, and $\alpha \rightarrow \infty$ (see e.g.~\cite{ww:ursurvey,ww:cliffordUR,renyiArxiv}). More precisely, we show that for measurements on $n$-qubit states $\rho$ in BB84 bases, the minimum values of the conditional R{\'e}nyi entropies for any $\alpha\in(1,2]$ are
\begin{align}\label{eq:alphanqubit}
	\min_\rho \halpha(X|\Theta)_{\rho|\rho} = n\cdot\frac{\alpha-\log(1+2^{\alpha-1})}{\alpha-1}\ ,
\end{align}
where 
\begin{equation}
\halpha(A|B)_{\rho|\rho} :=  \frac{1}{1-\alpha} \tr\left[\rho_{AB}^\alpha (\mathbb{I}_A\otimes \rho_{B})^{1-\alpha}\right]\ .
\end{equation}
Similarly, for measurements in the six-state bases
\begin{align}
	\min_\rho \halpha(X|\Theta)_{\rho|\rho} = n\cdot\frac{\log3-\log\left(1+2^{2-\alpha}\right)}{\alpha-1}\ .
\end{align}

%%%%%%%%%%%%%%%%%%%%%%%%%%%%%%%%%%%%%%%%%%%%

\section{Proof}

Let us now explain the proof of our results. A technical derivation including all details may be found in the appendix. For simplicity, we restrict ourselves to the case of BB84 measurements. An extension for six-state protocols is analogous and can be found in the appendix. To obtain~\eqref{eq:bb84result} we proceed in four steps. First, we will prove a tight uncertainty relation in terms of the $\alpha$-R{\'e}nyi entropy when $\rho$ is just an $n=1$ qubit state. Second, we show how to extend this result to an uncertainty relation for $n > 1$ qubits, giving us~\eqref{eq:alphanqubit}. The third step is to reintroduce $K$ as outlined in the introduction. Finally, we relate the R{\'e}nyi entropies of order $\alpha \in (1,2]$ to the smooth min-entropy. \\

\medskip
\noindent

{\bf Step 1 - A single qubit uncertainty relation:}

For the case when $A$ and $B$ are classical the conditional $\alpha$-R{\'e}nyi entropy reduces to the simple form
\begin{equation}
\halpha (A|B)_{\rho|\rho} = \frac{1}{1-\alpha} \log \sum_{b} p_{B=b} \sum_{a} p^\alpha_{A=a|B=b}\ .
\end{equation}
The relevant $\alpha$-R{\'e}nyi entropy for a single qubit state $\rho_k$ (where k denotes some classical information associated with the state $\rho_k$) is
\begin{align}
\halpha (X|\Theta)_{\rho_k|\rho_k} &=\frac{1}{1-\alpha} \log \sum_{\theta\in\01} p_{\theta} \sum_{x\in\01} p^\alpha_{x|k \theta}\nonumber\\
&=\frac{1}{1-\alpha}\log\left[\frac{1}{2}\cdot\sum_{\theta\in\01,x\in\01} p^\alpha_{x|k \theta}\right]\ .
\end{align}
Here $p_{x|k \theta}:=\tr(M_{x|\theta}~\rho_k)$, where $M_{x|\theta}$ denotes the measurement operator 
\begin{equation}
M_{x|\theta}= \mathbf{H}^\theta|x \rangle\langle x|\mathbf{H}^\theta\ ,
\end{equation}
with $\mathbf{H}$ the Hadamard matrix. To minimize the $\alpha$-R{\'e}nyi entropy for values of $\alpha\in(1,2]$, it it sufficient to maximize the summation term. Defining
\begin{equation}\label{PXT}
P(X|\Theta)_{\rho_k} = \frac{1}{2}\cdot\sum_{\theta\in\01,x\in\01} p^\alpha_{x|k \theta}\ ,
\end{equation}
we first rewrite $p_{x|k \theta}$ as functions of two variables: $g_x:=\tr(\sigma_x\rho_k)$ and $g_z:=\tr(\sigma_z\rho_k)$. The Bloch sphere condition for a qubit gives $g_x^2+g_y^2+g_z^2\leq g_x^2+g_z^2\leq 1$, which serves as a constraint in maximizing \eqref{PXT}. Switching to spherical coordinates and evaluating the partial derivatives of \eqref{PXT} according to multiple independent variables, we prove
\begin{eqnarray}\label{minimal_renyi_bb84}
\halpha(X|\Theta)_{\rho_k|\rho_k} &\geq& \frac{1}{1-\alpha} \log \left[ \frac{1}{2^{1+\alpha}} (2^{\alpha}+2) \right]\nonumber\\
&=& \frac{1}{\alpha-1} \left[\alpha-\log(1+2^{\alpha-1}) \right].
\end{eqnarray}
Moreover, the minimal $\alpha$-R{\'e}nyi entropy is achieved on an eigenstate of either measurement basis.

\medskip
\noindent

{\bf Step 2 - A relation for $n$-qubits:} To extend the one qubit uncertainty relation to multiple qubits, the central problem is to prove that the lower bound on the conditional entropy scales linearly with the block length $n$. This essentially implies that for a system of $n$ qubits, the entanglement across qubits does not give rise to a lower minimal $\alpha$-R{\'e}nyi entropy. In our analysis, we show this by first considering the last qubit measured, conditioned on all the previous $n-1$ measurement bases and values. That is, we consider a $n$-qubit normalized density operator $\rho_{ABk}$, where $B$ denotes the last qubit and $A$ the remaining $n-1$ qubits, and write 
\begin{equation}\label{sumterm}
P(X_B|\Theta)_{\rho_{ABk}} = \frac{1}{2}\cdot\sum_{\theta_B,x_B\in\01} p^\alpha_{x_B|\theta_B  x_A \theta_A k}\ ,
\end{equation}
where $p_{x_B|\theta_B  x_A \theta_A k}=\tr(M_{x_B|\theta_B}\sigma_B)$ with the corresponding normalized density operator
\begin{equation}
\sigma_B =\tr_A \left[ \frac{M_{x_A|\theta_A}~\rho_{ABk}~M_{x_A|\theta_A}^\dagger}{\tr\left[M_{x_A|\theta_A}~\rho_{ABk}~M_{x_A|\theta_A}^\dagger\right]} \right]\ .
\end{equation}
Since the uncertainty relation for one qubit~\eqref{minimal_renyi_bb84} holds for any density operator, it holds in particular for $\sigma_B$. By induction, it is then easily shown that the minimal entropy is additive.

\medskip
\noindent

{\bf Step 3 - Classical side information $K$:} After Steps 1 and 2, we established a tight uncertainty relation for a binary string $X^n$ conditioned on the basis string $\Theta^n$. Namely, we have
\begin{equation}
\halpha(X^n|\Theta^n)_{\rho_k|\rho_k} \geq n\cdot\frac{1}{\alpha-1} \left[\alpha-\log(1+2^{\alpha-1}) \right]\ .
\end{equation}
for any $n$-qubit state $\rho_k$. In this step, we obtain the conditioning with relation to classical side information K. In other words, we need to evaluate $ \halpha(X|\Theta K)_{\rho|\rho}$ with
\begin{equation}
\rho = \sum_{\theta\in\01^n} p_{\theta}~\ketbra{\theta}{\theta} \sum_k p_{k|\theta}~\rho_k \sum_{x\in\01^n} p_{x|\theta k}~\ketbra{x}{x}\ .
\end{equation}
By observing the independence of $\Theta$ and K, we show that the bounds of these values coincide, implying that
\begin{equation}\label{step3end}
\halpha(X|\Theta K)_{\rho|\rho} \geq n\cdot \frac{1}{\alpha-1} \left[\alpha-\log(1+2^{\alpha-1}) \right]\ .
\end{equation}

\smallskip
\noindent

{\bf Step 4 - Relation to the min-entropy:} As motivated previously, the final desired measure of entropy is the \textit{smooth} min-entropy $\hmineps (X|\Theta K)_{\rho}$. A recent work~\cite{QAEP} has shown that a lower bound can be obtained for this quantity. Namely, we have for any state $\rho$ and $\alpha\in(1,2]$
\begin{equation}
\hmineps (X|\Theta K)_{\rho} \geq \halpha (X|\Theta K)_{\rho|\rho} -\frac{1}{\alpha-1} \log	\frac{2}{\epsilon^2}\ .
\end{equation}
This combined with~\eqref{step3end} implies the claim
\begin{align}
\hmineps (X|\Theta K)_\rho \geq\quad&n\cdot \max_{s\in(0,1]} \frac{1}{s} \left[1+s-\log(1+2^s) \right]\nonumber\\
&-\frac{1}{s} \log \frac{2}{\epsilon^2}\ .
\end{align}

It is worth noting that as $n\rightarrow\infty$, the maximum is obtained for $s\rightarrow 0$, implying that as the system size approaches infinity, the optimal bound is still given by~\eqref{eq:bb84old}. That is, in terms of a bound which comes from the Shannon entropy. However, our analysis provides a better alternative to bound the smooth min-entropy for finite system sizes, and hence is more useful for practical implementations.

%%%%%%%%%%%%%%%%%%%%%%%%%%%%%%%%%%%%%%%%%%%%

\section{Conclusions}

We have proven entropic uncertainty relations that pave the way for a practical implementation of BB84 and six-state protocols~\cite{chris:diss,serge:bounded,serge:new,Noisy1,noisy:robust,noisy:new,secureid,Curty10,secureid:practical} at small block length. Indeed, our relation has already been employed in~\cite{noisyimpl} for an experimental implementation of bit commitment in the bounded/noisy-storage model.

%Evidently, in the limit of large $n$, the relation of~\cite{serge:new} is still preferable since it will yield a larger amount of min-entropy for an exponentially vanishing error. On the other hand, for our relation the smooth min-entropy is bounded by $\alpha$-entropies for $\alpha$ strictly larger than 1, which is smaller than the Shannon entropy. The subtraction of $O(\log(1/\eps))$ can potentially be large for very small $\eps$. Yet, in the practically feasible regime our relation provides a decisive advantage enabling an experimental implementation. 

It is an interesting open question whether similarly strong relations can also be obtained with respect to quantum side information~\cite{Berta09,Coles11,qcextract}. This would allow security statements for such protocols in terms of the quantum capacity~\cite{qcextract} of the storage device, rather than the classical capacity~\cite{noisy:new} or the entanglement cost~\cite{entCost}. For the six-state case this has been done (implicitly) in~\cite{qcextract} for the special case of a R{\'e}nyi type entropy of order $\alpha = 2$, yielding however again a slightly weaker uncertainty relation as might be possible for other values of $\alpha \in (1,2]$. As the amount of uncertainty is the the key element in being able to tolerate experimental errors and losses in said protocols, it would be nice to extend our result to this setting.

\newpage

\onecolumngrid
\section*{Appendix}
In this appendix, we provide the technical details that lead to our claims. In section A, the complete proof for the uncertainty relation for BB84 bases (measurements in eigenstates of Pauli $\sigma_x$ and $\sigma_z$) is presented. In section B, similar methods are used to derive bounds for six-state bases (measurements in eigenstates of Pauli $\sigma_x$, $\sigma_y$ and $\sigma_z$).\\

We first restate the definitions of the relevant entropic quantities. Given any finite-dimensional Hilbert space $\hil$, let $\mathcal{S}_{\leq}(\hil)$ denote the set of sub-normalized density operators on $\hil$, and $\mathcal{S}(\hil)$ denote the set of normalized density operators on $\hil$. For $\hil_A$ and $\hil_B$, the conditional min-entropy of $\rho_{AB}\in\mathcal{S}(\hil_{A}\otimes\hil_{B})$ given $\sigma_B\in\mathcal{S}(\hil_B)$ is defined as
\begin{equation}
\hmin(A|B)_{\rho|\sigma} := \sup~\lbrace\lambda\in\mathbb{R}:2^{-\lambda}\cdot\id_A\otimes\sigma_B\geq\rho_{AB}\rbrace\ ,
\end{equation}
and the conditional min-entropy of $A$ given $B$ is defined as
\begin{equation}
\hmin(A|B)_{\rho} := \sup_{\sigma_{B}\in\mathcal{S}(\hil_{B})}~\hmin(A|B)_{\rho|\sigma}\ .
\end{equation}
The smooth conditional min-entropy of $A$ given $B$ and $\eps\geq0$ is defined as
\begin{align}
\hmineps(A|B)_{\rho} := \sup_{\rho'\in\mathcal{B}^{\eps}(\rho)}~\hmin(A|B)_{\rho'}\ ,
\end{align}
where $\mathcal{B}^{\eps}(\rho_{AB}):=\left\{\rho_{AB}'\in\mathcal{S}_{\leq}(\hil_{A}\otimes\hil_{B})|P=\sqrt{1-F^{2}(\rho,\rho')}\leq\eps\right\}$ is an $\eps$-ball in terms of the purified distance with
\begin{align}
F(\rho,\rho'):=\|\sqrt{\rho}\sqrt{\rho'}\|_{1}+\sqrt{(1-\mathrm{tr}[\rho])(1-\mathrm{tr}[\rho'])}
\end{align}
the (generalized) fidelity~\cite{Tomamichel09}.\\

The conditional $\alpha$-R{\'e}nyi entropies are defined as
\begin{equation}
\halpha(A|B)_{\rho|\rho} := \frac{1}{1-\alpha} \log\tr\left[\rho_{AB}^\alpha ( \id_A \otimes \rho_{B})^{1-\alpha}\right]\ ,
\end{equation}
where (possible) inverses are understood as generalized inverses. Note that there exist also slightly different definitions of conditional $\alpha$-R{\'e}nyi entropies in the literature.

%%%%%%%%%%%%%%%%%%%%%%%%%%%%%%%%%%%%%%%%%%%%

\section{A. Uncertainty relation for BB84 measurements}\label{formalproof}

\subsubsection{Step 1 : Single qubit relation}

For any qubit state $\rho\in\mathcal{S}(\mathbb{C}^{2})$ we have to examine the quantities 
\begin{eqnarray}\label{alpharenyi}
\halpha (X|\Theta)_{\rho|\rho} &=& \frac{1}{1-\alpha} \log~P_\alpha (X|\Theta)\nonumber\\
P_\alpha (X|\Theta) &=& \tr\left[\rho_{X\Theta}^\alpha ( \id_X \otimes \rho_{\Theta})^{1-\alpha}\right]\nonumber\\
\rho_{X\Theta}&=&\sum_{\theta,x}p_{\theta}\cdot p_{x|\theta}\proj{x}\otimes\proj{\theta}\nonumber\\
p_{x|\theta}&=&\mathrm{tr}(M_{x|\theta}\rho)\ ,
\end{eqnarray}
with $M_{x|\theta}=\mathbf{H}^{\theta}\proj{x}\mathbf{H}^{\theta}$, and $\mathbf{H}=\frac{1}{\sqrt{2}}\begin{pmatrix}1&1\\1&-1\end{pmatrix}$ the Hadamard matrix. Since the choice of measurements is uniform, we get
\begin{align}
P_{\alpha}(X|\Theta)=\frac{1}{2}\cdot\sum_{\theta,x}p^{\alpha}_{x|\theta}\ .
\end{align}

%\begin{eqnarray}
%\rho_{X\Theta, K=k} &=& \sum_{\theta,x} p_{x|\theta, K=k}~|x\rangle\langle x| \otimes p_{\theta|K=k}~|\theta\rangle\langle\theta| \nonumber\\
%&=& \sum_{\theta,x} p_{x|\theta, K=k}~|x\rangle\langle x| \otimes p_{\theta}~|\theta\rangle\langle\theta|
%\end{eqnarray}
%since both $X$ and $\Theta$ are classical variables, and the basis choice $\Theta$ independent of the quantum state $\rho_k$. The quantity $ P_\alpha (X|\Theta)$ takes the form
%\begin{eqnarray}
%P_\alpha (X|\Theta) &=& \tr \left[ \left( \sum_{\theta,x} p^\alpha_{x|\theta,K=k}~ |x\rangle\langle x| \otimes p^\alpha_{\theta} |\theta\rangle\langle\theta|\right) \left( \id_X \otimes \sum_{\theta'} p^{1-\alpha}_{\theta'} |\theta'\rangle\langle\theta'| \right) \right]\nonumber\\
%&=& \tr \left[\sum_{\theta,x} p^\alpha_{x|\theta K=k} |x\rangle\langle x| \otimes p_{\theta} |\theta\rangle\langle\theta|\right]=\sum_{\theta,x} p_{\theta}\cdot p^\alpha_{x|\theta K=k}\ .
%\end{eqnarray}

\begin{theoremApp}\label{theorem1}
Let $\rho\in\mathcal{S}(\mathbb{C}^{2})$, and $\alpha=1+s$ with $s\in(0,1]$. Then we have for BB84 measurements as in~\eqref{alpharenyi} that
\begin{equation}\label{minimal_renyi_BB84}
\halpha(X|\Theta)_{\rho|\rho} \geq \frac{1}{s} [1+s-\log(1+2^s)]\ .
\end{equation}
\end{theoremApp}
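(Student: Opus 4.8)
The plan is to reduce the minimization of $\halpha(X|\Theta)_{\rho|\rho}$ over qubit states to an explicit constrained optimization on the Bloch sphere, and then to solve that optimization by a short calculus argument. Since $\alpha = 1+s$ with $s > 0$, the prefactor $\frac{1}{1-\alpha} = -\frac{1}{s}$ is negative, so the claimed lower bound on $\halpha$ is equivalent to an \emph{upper bound} on $P_\alpha(X|\Theta) = \frac{1}{2}\sum_{\theta,x} p_{x|\theta}^{\alpha}$; concretely I need to show $P_\alpha(X|\Theta) \le \frac{1}{2^{1+\alpha}}(2^\alpha + 2)$ for every $\rho \in \mathcal{S}(\mathbb{C}^2)$, with equality on an eigenstate of $\sigma_x$ or $\sigma_z$. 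First I would parametrize: writing $g_x = \tr(\sigma_x \rho)$ and $g_z = \tr(\sigma_z \rho)$, the four measurement probabilities are $p_{0|0} = \tfrac{1}{2}(1+g_z)$, $p_{1|0} = \tfrac{1}{2}(1-g_z)$, $p_{0|1} = \tfrac{1}{2}(1+g_x)$, $p_{1|1} = \tfrac{1}{2}(1-g_x)$, and the Bloch constraint gives $g_x^2 + g_z^2 \le 1$. So
\begin{align}
2^{\alpha+1} P_\alpha(X|\Theta) = (1+g_z)^\alpha + (1-g_z)^\alpha + (1+g_x)^\alpha + (1-g_x)^\alpha =: f(g_x) + f(g_z),
\end{align}
where $f(t) = (1+t)^\alpha + (1-t)^\alpha$ on $t \in [-1,1]$, and I want to maximize $f(g_x) + f(g_z)$ subject to $g_x^2 + g_z^2 \le 1$.

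The key observation is the behavior of $f$ for $\alpha \in (1,2]$. Compute $f'(t) = \alpha[(1+t)^{\alpha-1} - (1-t)^{\alpha-1}]$ and $f''(t) = \alpha(\alpha-1)[(1+t)^{\alpha-2} + (1-t)^{\alpha-2}] \ge 0$ since $\alpha - 1 > 0$; hence $f$ is convex on $[-1,1]$ and even, with minimum $f(0) = 2$ and maximum $f(\pm 1) = 2^\alpha$ at the endpoints. Now I would switch to polar coordinates $g_x = r\cos\phi$, $g_z = r\sin\phi$ with $r \in [0,1]$. For fixed $r$, convexity and evenness of $f$ make $f(r\cos\phi) + f(r\sin\phi)$ a function whose maximum over $\phi$ is attained at the "axis" points $\phi \in \{0, \pi/2, \ldots\}$, i.e. where one coordinate is $\pm r$ and the other is $0$ (this is the "evaluating the partial derivatives according to multiple independent variables" step in the sketch — critical points in the interior of the angular variable are minima or saddles, and one checks the boundary of the angular range). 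At such a point the objective is $f(r) + f(0) = f(r) + 2$, and since $f$ is increasing on $[0,1]$ this is maximized at $r = 1$, giving $f(1) + 2 = 2^\alpha + 2$. Therefore $2^{\alpha+1} P_\alpha(X|\Theta) \le 2^\alpha + 2$, which rearranges to exactly~\eqref{minimal_renyi_BB84}, and equality holds at $g_x^2 + g_z^2 = 1$ with one of them zero — that is, on an eigenstate of $\sigma_x$ or $\sigma_z$, as claimed.

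I expect the main obstacle to be making the angular optimization rigorous: one must argue that, over the circle $g_x^2 + g_z^2 = r^2$, the sum of two convex even functions of the coordinates is maximized at the coordinate axes rather than at the diagonal $g_x = g_z = r/\sqrt 2$. The clean way is to note that along the circle, parametrized by $\phi$, the function $h(\phi) = f(r\cos\phi) + f(r\sin\phi)$ has $h'(\phi) = r\big[f'(r\sin\phi)\cos\phi - f'(r\cos\phi)\sin\phi\big]$, whose zeros in $(0,\pi/2)$ include the diagonal $\phi = \pi/4$; a second-derivative check there, using $f'' \ge 0$ and $f' \ge 0$ on $[0,1]$, shows $\phi=\pi/4$ is a local \emph{minimum}, so the maximum is pushed to the endpoints $\phi = 0, \pi/2$. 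Combined with monotonicity of $f$ on $[0,1]$ to also push $r$ to $1$, this closes the argument; everything else is the bookkeeping already displayed above, and the six-state case will run through the analogous optimization with three Bloch components and the constraint $g_x^2 + g_y^2 + g_z^2 \le 1$.
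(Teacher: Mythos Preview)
Your approach is essentially the paper's: parametrize by Bloch components, pass to polar coordinates, establish radial monotonicity, locate the angular critical points at $\phi=0,\frac{\pi}{4},\frac{\pi}{2}$, and classify them via the second derivative. The packaging through $f(t)=(1+t)^{\alpha}+(1-t)^{\alpha}$ is a clean way to organize the computation, and your order of optimization (angle first, then radius) is swapped relative to the paper, but the logical skeleton is identical.

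There is, however, a genuine gap in your second-derivative step. At $\phi=\pi/4$ (writing $c=r/\sqrt{2}$) one computes
\[
h''(\tfrac{\pi}{4})=r^{2}f''(c)-\sqrt{2}\,r\,f'(c),
\]
so $h''(\pi/4)\ge0$ is equivalent to $c\,f''(c)\ge f'(c)$. This does \emph{not} follow from merely ``$f''\ge0$ and $f'\ge0$ on $[0,1]$'' as you assert: for instance $f(t)=\sqrt{1+t^{2}}$ is even and convex with $f'\ge0$ on $[0,1]$, yet $t\,f''(t)<f'(t)$ there. What is actually needed is that $f''$ be non-decreasing on $[0,1)$, so that $f'(c)=\int_{0}^{c}f''(t)\,dt\le c\,f''(c)$; for the specific $f$ at hand this holds because
\[
f'''(t)=\alpha(\alpha-1)(\alpha-2)\big[(1+t)^{\alpha-3}-(1-t)^{\alpha-3}\big]\ge0
\]
for $\alpha\in(1,2]$ and $t\in[0,1)$ (both factors are non-positive). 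The paper isolates exactly this inequality---written there as $s\big[(1+a)^{s-1}+(1-a)^{s-1}\big]\ge\frac{1}{a}\big[(1+a)^{s}-(1-a)^{s}\big]$---as a separate technical lemma and proves it by Taylor expansion, so this is precisely the place where a non-obvious calculation is hiding. A related loose end: you only say the interior zeros of $h'$ ``include'' $\pi/4$; to push the maximum to the endpoints you also need $\pi/4$ to be the \emph{only} interior critical point (for $\alpha<2$; at $\alpha=2$ the function $h$ is constant in $\phi$), which follows from strict monotonicity of $t\mapsto f'(t)/t$---again a consequence of $f'''>0$.
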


\begin{proof}
We evaluate the term 
\begin{eqnarray}
P_{1+s}(X|\Theta) &=& \frac{1}{2}\cdot\sum_{\theta \in \{ 0,1 \} }  \sum_{x \in \{ 0,1 \}} p_{x|\theta}^{1+s} \nonumber \\
& & = \frac{1}{2} \left[ \tr(\rho |0\rangle\langle 0|)^{1+s}+\tr(\rho |1\rangle\langle 1|)^{1+s}+\tr(\rho |+\rangle\langle +|)^{1+s}+\tr(\rho |-\rangle\langle -|)^{1+s} \right] \nonumber \\
& & = \frac{1}{2^{2+s}} \lbrace [1+\tr(\sigma_z\rho )]^{1+s}+[1-\tr(\sigma_z\rho)]^{1+s}+[1+\tr(\sigma_x\rho)]^{1+s}+[1-\tr(\sigma_x\rho)]^{1+s}  \rbrace \nonumber\\
& & = \frac{1}{2^{2+s}} [ (1+z)^{1+s}+(1-z)^{1+s}+(1+x)^{1+s}+(1-x)^{1+s} ]\ ,
\end{eqnarray}
where $x:=\tr(\sigma_x\rho)$ and $z:=\tr(\sigma_z\rho)$. For any one qubit state $\rho$, we have the Bloch sphere condition\begin{equation}\label{anticomm}
\tr(\sigma_x\rho)^2 + \tr(\sigma_y\rho)^2 + \tr(\sigma_z\rho)^2 \leq1\ .
\end{equation}
and can therefore parametrize $x$ and $z$ by polar coordinates
\begin{equation}
x=r\sin \phi, z=r\cos \phi\ ,
\end{equation}
where $r \in [0,1]$, and $\phi \in [0,\frac{\pi}{2}]$. $P_\alpha(X|\Theta)$ can then be rewritten as a function depending on the variables $s$, $r$, and $\phi$
\begin{equation}
Q(s,r,\phi)=\frac{1}{2^{2+s}} [ (1+r\cos\phi)^{1+s}+(1-r\cos\phi)^{1+s}+(1+r\sin\phi)^{1+s}+(1-r\sin\phi)^{1+s} ]\ .
\end{equation}

The partial differential of $Q(s,r,\phi)$ with respect to $r$ becomes
\begin{equation}
\frac{\partial Q(s,r,\phi)}{\partial r} = \frac{1+s}{2^{2+s}}  [ \cos\phi(1+r\cos\phi)^{s}-\cos\phi(1-r\cos\phi)^{s}+\sin\phi(1+r\sin\phi)^{s}-\sin\phi(1-r\sin\phi)^{s} ]\ .
\end{equation}
Since in the range of $\phi$, $\sin\phi$ and $\cos\phi$ are positive, we obtain $\frac{\partial Q(s,r,\phi)}{\partial r} \geq 0$, which implies that the maximum is attained at $r=1$. The partial differential of $Q(s,r,\phi)$ with respect to $\phi$ at $r=1$ becomes
\begin{eqnarray}\label{firstderphi}
\frac{\partial Q(s,1,\phi)}{\partial \phi} &=& \frac{1+s}{2^{2+s}}  [ -\sin\phi(1+\cos\phi)^{s}+\sin\phi(1-\cos\phi)^{s}+\cos\phi(1+\sin\phi)^{s}-\cos\phi(1-\sin\phi)^{s} ] \nonumber\\
&=& \frac{1+s}{2^{2+s}} \lbrace\sin\phi[(1-\cos\phi)^s-(1+\cos\phi)^s]+\cos\phi[(1+\sin\phi)^s-(1-\sin\phi)^s]\rbrace\ .
\end{eqnarray}
For a stationary point of $Q(s,1,\phi)$, \eqref{firstderphi} is zero and the solution is obtained at three points: $\phi=0,~\frac{\pi}{4},~\frac{\pi}{2}$. The characteristics of the endpoints $\phi=0,\frac{\pi}{2}$ are the same, hence it suffices to analyze either. It remains to analyze the characteristic of these stationary points. To do so, we evaluate the second partial derivative at these points as a function of $s$
\begin{eqnarray}
f_1 (s) &=& \frac{\partial^2 Q(s,1,\phi)}{\partial \phi^2} \bigg|_{\phi=0} = \frac{1+s}{2^{1+s}} \left(s-2^{s-1}\right),\quad s\geq0 \\
f_2 (s) &=& \frac{\partial^2 Q(s,1,\phi)}{\partial \phi^2} \bigg|_{\phi=\frac{\pi}{4}}\nonumber\\
&=& \frac{1+s}{2^{2+s}} \left\lbrace s\cdot \left[\left(1-\frac{1}{\sqrt{2}}\right)^{s-1}+\left(1+\frac{1}{\sqrt{2}}\right)^{s-1}\right]-\sqrt{2}\left[\left(1+\frac{1}{\sqrt{2}}\right)^{s}-\left(1-\frac{1}{\sqrt{2}}\right)^{s}\right]\right\rbrace\ .\label{f2s}
\end{eqnarray}
To determine if the stationary point is a local minima or maxima, we show the positivity/negativity of these functions over the interval $s\in(0,1]$. Note that $f_1(0)=-\frac{1}{4}$ and $f_1(1)=0$, while $f_1(s)$ is always increasing since $\frac{\partial f_1(s)}{\partial s} = \frac{s}{2^{1+s}}[2-(1+s)\ln 2] \geq 0$. Hence $f_1(s)$ is negative, implying the endpoints correspond to a local maxima. On the other hand, note that the second term in~\eqref{f2s} is exactly of the form $g(a,s)$ as stated in Lemma C.\ref{positive} with $a=\frac{1}{\sqrt{2}}$. With this, we conclude that the point $\phi=\frac{\pi}{4}$ is a local minimum. This leaves the endpoints as the only candidates for optimal parameters that achieve the maxima of $Q(s,1,\phi)$. Evaluating $Q(s,1,0)$ then provides us the bound
\begin{equation}
P_{1+s}(X|\Theta) \leq  Q(s,1,0) = \frac{1}{2^{1+s}} (2^{s}+1)\ ,
\end{equation}
and plugging this back into \eqref{alpharenyi} gives \eqref{minimal_renyi_BB84}.
\end{proof}

%%%%%%%%%%%%%%%%%%%%%%%%%%%%%%%%%%%%%%%%%%%%

\subsubsection{Step 2 : Relation for \texorpdfstring{$n$}{n}-qubits}

The goal is to prove that for any $n$-qubit state measured independently on each qubit in BB84 bases, the minimal output $\alpha$-R{\'e}nyi entropy is additive. Let first $n=2$ with the first system denoted by $A$ and the second by $B$. We have
\begin{eqnarray}\label{purity}
P_\alpha(X_{A}X_{B}|\Theta_{A}\Theta_{B})&=& \sum_{\theta_A,\theta_B} p_{\theta_A,\theta_B} \sum_{x_A,x_B}p_{x_A,x_B|\theta_A,\theta_B}^\alpha \nonumber\\
&=&  \frac{1}{2}\cdot\sum_{x_A,\theta_A} p_{x_A|\theta_A}^\alpha \cdot \frac{1}{2} \sum_{x_B,\theta_B} p_{x_B|x_A,\theta_A,\theta_B}^\alpha
\end{eqnarray}
where  $p_{\Theta_B|\Theta_A}=p_{\Theta_B}$ and $p_{\Theta_B=0}=p_{\Theta_b=1}=1/2$. Now assume that we have a one qubit upper bound
\begin{align}\label{eq:additive}
\frac{1}{2}\cdot\sum_{x_A,\theta_A} p_{x_A|\theta_A}^\alpha\leq c
\end{align}
for $P_\alpha(X|\Theta)$. Note that the second summation term in \eqref{purity} corresponds to $P_\alpha(X|\Theta)$ of the single qubit density operator
\begin{equation}\label{sigmaaftermeas}
\sigma_{B}=\tr_A \left[ \frac{M_{x_A|\theta_A}~\rho_{AB}~M_{x_A|\theta_A}^\dagger}{\tr[M_{x_A|\theta_A}~\rho_{AB}~M_{x_A|\theta_A}^\dagger]} \right]\ ,
\end{equation} 
where $M_{x_A|\theta_A}=\mathbf{H}^{\theta_A}|x_A\rangle\langle x_A|\mathbf{H}^{\theta_A}\otimes\id_{B}$. Hence we have 
\begin{eqnarray}
P_\alpha(X_{A}X_{B}|\Theta_{A}\Theta_{B})&\leq& \frac{c}{2}\cdot\sum_{x_A,\theta_A} p_{x_A|\theta_A}^\alpha\leq c^{2}\ .
\end{eqnarray}
The following lemma generalizes this argument to arbitrary $n$.

\begin{lemmaApp} For $\rho\in\mathcal{S}((\mathbb{C}^{2})^{\otimes n})$ measured independently on each qubit in BB84 bases, the minimal conditional $\alpha$-R{\'e}nyi entropy of $X^n$ with respect to $\Theta^n$ is additive.

\begin{proof} Consider
\begin{eqnarray}
P_\alpha(X^n|\Theta^n)_{\rho|\rho} &=& \sum_{\theta^n\in\lbrace0,1\rbrace^n} p_{\theta^n} \sum_{x^n\in\lbrace0,1\rbrace^n} p^\alpha_{x^n|\theta^n} \nonumber\\
&=& \frac{1}{2^n} \cdot\sum_{\theta^n\in\lbrace0,1\rbrace^n} \sum_{x^n\in\lbrace0,1\rbrace^n} \left(\displaystyle\prod_{i=1}^n p_{i|x^{i-1},\theta^{i-1}}\right)^\alpha
\end{eqnarray}
where $p_{i|x^{i-1},\theta^{i-1}} = p_{x_i|X^{i-1}=x^{i-1},\Theta^{i-1}=\theta^{i-1},K=k}$ for $i\geq2$ and $p_1=p_{x_1|\theta_1,K=k}$. Assuming the same upper bound as in~\eqref{eq:additive} we get
\begin{eqnarray}
P_\alpha(X^n|\Theta^n)_{\rho|\rho} &=& \frac{1}{2^{n-1}} \sum_{\theta^n\in\lbrace0,1\rbrace^n} \sum_{x^n\in\lbrace0,1\rbrace^n}\left(\displaystyle\prod_{i=1}^{n-1} p_{i|x^{i-1},\theta^{i-1}}\right)^\alpha \cdot \frac{1}{2} p_{n|x^{n-1},\theta^{n-1}}^\alpha \nonumber\\
&\leq& c \cdot \frac{1}{2^{n-1}} \sum_{\Theta^{n-1},X^{n-1}\in\lbrace0,1\rbrace^{n-1}}\left(\displaystyle\prod_{i=1}^{n-1} p_i\right)^\alpha\leq c^{n}\ .
\end{eqnarray}
\end{proof}
\end{lemmaApp}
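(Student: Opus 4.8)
The plan is to prove the claim by induction on $n$, peeling off one qubit at a time and applying the single-qubit relation of Theorem~A.\ref{theorem1} at each stage. Since $\halpha(X^n|\Theta^n)_{\rho|\rho}=\frac{1}{1-\alpha}\log P_\alpha(X^n|\Theta^n)$ with $\frac{1}{1-\alpha}<0$ for $\alpha=1+s\in(1,2]$, minimising the entropy is the same as maximising $P_\alpha(X^n|\Theta^n)=2^{-n}\sum_{\theta^n,x^n}p_{x^n|\theta^n}^{\alpha}$, and additivity of the minimal entropy is equivalent to the uniform bound $P_\alpha(X^n|\Theta^n)\le c^n$, where $c=2^{-(1+s)}(1+2^{s})$ is the single-qubit maximum of $P_\alpha(X|\Theta)$ established in Step~1 (cf.~\eqref{eq:additive}).

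For the inductive step I would split the $n$ qubits into the first $n-1$ (call this register $A$) and the last one ($B$), write $\rho_{AB}\in\mathcal{S}((\mathbb{C}^{2})^{\otimes n})$, and use the chain rule $p_{x_Ax_B|\theta_A\theta_B}=p_{x_A|\theta_A}\,p_{x_B|x_A\theta_A\theta_B}$ together with the independent uniform basis choice $p_{\theta_A\theta_B}=2^{-(n-1)}\cdot\frac{1}{2}$ and the fact that the marginal distribution of $x_A$ does not depend on the later basis $\theta_B$ (the per-qubit measurement operators act on disjoint tensor factors, so that marginal is $\tr_A(M_{x_A|\theta_A}\rho_A)$ with $\rho_A=\tr_B\rho_{AB}$). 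This factorises
\[
P_\alpha(X^n|\Theta^n)_{\rho_{AB}}=\frac{1}{2^{n-1}}\sum_{\theta_A,x_A}p_{x_A|\theta_A}^{\alpha}\cdot\frac{1}{2}\sum_{\theta_B,x_B}p_{x_B|x_A\theta_A\theta_B}^{\alpha}\ .
\]
The key observation is that the inner sum is exactly $P_\alpha(X_B|\Theta_B)$ evaluated on the normalised post-measurement state $\sigma_B$ of~\eqref{sigmaaftermeas}, now regarded as a function of the conditioning pair $x_A,\theta_A$, which is a legitimate qubit density operator. Since Theorem~A.\ref{theorem1} holds for \emph{every} qubit state, the inner sum is at most $c$, uniformly in $x_A$ and $\theta_A$; the convention that a term with $p_{x_A|\theta_A}=0$ contributes $0$ makes the ill-defined renormalisation harmless. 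Pulling this bound out of the $A$-sum leaves $P_\alpha(X^n|\Theta^n)_{\rho_{AB}}\le c\cdot P_\alpha(X^{n-1}|\Theta^{n-1})_{\rho_A}$, so the induction hypothesis (base case $n=1$ being Theorem~A.\ref{theorem1}) gives $P_\alpha(X^n|\Theta^n)\le c^n$. Substituting $\log c=\log(1+2^{s})-(1+s)$ into $\frac{1}{1-\alpha}\log P_\alpha$ yields $\halpha(X^n|\Theta^n)_{\rho|\rho}\ge\frac{n}{s}\left[1+s-\log(1+2^{s})\right]$, and the bound is attained---so the minimal entropy is genuinely additive, not merely bounded---by the $n$-fold tensor power of the single-qubit optimiser, which Step~1 identified as an eigenstate of $\sigma_x$ or $\sigma_z$.

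I do not expect a genuine obstacle: the substance of the result is front-loaded into Step~1, and all the induction uses is that the single-qubit relation is \emph{state-independent}, which is precisely why entanglement across qubits cannot lower the minimal entropy. The points that need care are all bookkeeping: making the chain-rule factorisation precise (in particular, that the $A$-marginal is independent of $\theta_B$), verifying that $\sigma_B$ is a valid normalised density operator whenever it is defined, and handling zero-probability conditioning events by the stated convention. One could alternatively unroll the chain rule over all $n$ coordinates at once and bound the innermost factor first, but the one-qubit-at-a-time induction keeps the logic---and the ``entanglement does not help'' mechanism---most transparent.
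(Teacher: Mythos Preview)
Your proposal is correct and follows essentially the same approach as the paper: factor the joint probability via the chain rule, recognise the innermost factor as $P_\alpha(X|\Theta)$ evaluated on the conditional qubit state $\sigma_B$ of~\eqref{sigmaaftermeas}, apply the state-independent single-qubit bound of Theorem~A.\ref{theorem1}, and iterate. The only cosmetic difference is that the paper unrolls the chain rule over all $n$ coordinates at once (the alternative you mention in your last sentence), whereas you phrase it as a clean two-block induction; you are also slightly more careful than the paper about the $\theta_B$-independence of the $A$-marginal, zero-probability conditioning, and tightness via the tensor-power optimiser.
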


Combining this with the one qubit uncertainty relation derived before, we obtain the following.

\begin{corollaryApp}\label{bb84uncert}
For $\alpha=1+s$ with $s\in(0,1]$, and $\rho\in\mathcal{S}((\mathbb{C}^{2})^{\otimes n})$ measured independently on each qubit in BB84 bases, we have
\begin{equation}
\halpha (X^n|\Theta^n)_{\rho|\rho} \geq n\cdot\frac{1}{s}\left[1+s-\log(1+2^{s})\right]\ .
\end{equation}
\end{corollaryApp}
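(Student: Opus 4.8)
The plan is to combine the single-qubit bound from Theorem~A.\ref{theorem1} with the additivity Lemma~A.\ref{...} that was just established, so the work is almost entirely bookkeeping. First I would recall from the proof of Theorem~A.\ref{theorem1} that the single-qubit quantity $P_{1+s}(X|\Theta)$ is bounded above by the \emph{constant} $c := Q(s,1,0) = \tfrac{1}{2^{1+s}}(2^s+1)$, uniformly over all $\rho\in\mathcal{S}(\mathbb{C}^2)$. This is exactly the hypothesis~\eqref{eq:additive} required by the additivity lemma, with this explicit value of $c$.

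Next I would invoke Lemma~A.\ref{...}: for any $\rho\in\mathcal{S}((\mathbb{C}^2)^{\otimes n})$ measured independently on each qubit in the BB84 bases, the lemma gives $P_\alpha(X^n|\Theta^n)_{\rho|\rho}\le c^n$ with $\alpha=1+s$. The one subtlety to check is that the conditional states $\sigma_B$ appearing in the induction (after measuring the earlier qubits) are themselves normalized single-qubit states in $\mathcal{S}(\mathbb{C}^2)$, so the single-qubit bound indeed applies to each of them; this is precisely what equation~\eqref{sigmaaftermeas} and the surrounding discussion ensures, so no additional argument is needed here.

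Finally I would translate the bound on $P_\alpha$ back into a bound on the R\'enyi entropy. Since $\alpha = 1+s > 1$, the prefactor $\tfrac{1}{1-\alpha} = -\tfrac{1}{s}$ is negative, so an upper bound on $P_\alpha(X^n|\Theta^n)$ becomes a lower bound on $\halpha(X^n|\Theta^n)_{\rho|\rho}$:
\begin{equation}
\halpha(X^n|\Theta^n)_{\rho|\rho} = \frac{1}{1-\alpha}\log P_\alpha(X^n|\Theta^n)_{\rho|\rho} \geq -\frac{1}{s}\log c^n = -\frac{n}{s}\log\!\left[\frac{2^s+1}{2^{1+s}}\right].
\end{equation}
Simplifying the logarithm, $-\log\bigl[(2^s+1)/2^{1+s}\bigr] = (1+s) - \log(1+2^s)$, which yields exactly the claimed bound $\halpha(X^n|\Theta^n)_{\rho|\rho}\ge \tfrac{n}{s}[1+s-\log(1+2^s)]$.

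There is essentially no hard obstacle remaining at this stage: the genuinely technical work (the Bloch-sphere optimization in Step~1 and the inductive collapse in Step~2) has already been carried out. The only point requiring care is sign-tracking when passing through $\tfrac{1}{1-\alpha}$ and making sure the additivity lemma's constant $c$ is identified with the \emph{optimal} single-qubit value $Q(s,1,0)$ rather than some looser bound, since otherwise the resulting $n$-qubit relation would not be tight. Once that identification is made, the corollary follows immediately by the two displayed lines above.
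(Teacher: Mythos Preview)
Your proposal is correct and matches the paper's approach exactly: the corollary is stated immediately after the additivity lemma with no separate proof, precisely because it follows by plugging the single-qubit constant $c=Q(s,1,0)=\tfrac{1}{2^{1+s}}(2^{s}+1)$ from Theorem~A.\ref{theorem1} into Lemma~A.1 and tracking the sign in $\tfrac{1}{1-\alpha}\log(\cdot)$, just as you do. Your logarithm simplification is also correct.
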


%%%%%%%%%%%%%%%%%%%%%%%%%%%%%%%%%%%%%%%%%%%%

\subsubsection{Step 3 : Classical side information K}

In Corollary A.\ref{bb84uncert} we have obtained an uncertainty relation $\halpha (X^n|\Theta^n)_{\rho|\rho}$ for any $n$-qubit states $\rho$. But generally we want to consider $n$-qubit states $\rho_k$ labelled with classical information $K$, and we need to make a relation to the quantity $\halpha(X^{n}|\Theta^{n}K)_{\rho|\rho}$ for the state $\rho=\sum_k p_k \rho_k$. That is, the $\alpha$-R{\'e}nyi entropy is also conditioned on classical information $K$. This quantity is evaluated as
\begin{eqnarray}
\halpha (X^n|\Theta^n K)_{\rho|\rho} &=& \frac{1}{1-\alpha} \log \sum_{k} \sum_{\theta^n\in\01^n} p_{k,\theta^n} \sum_{x^n\in\01^n} p^\alpha_{x^n|\theta^n,k}\nonumber\\
&=& \frac{1}{1-\alpha} \log \sum_{k} p_k \sum_{\theta^n\in\01^n} p_{\theta^n|k} \sum_{x^n\in\01^n} p^\alpha_{x^n|\theta^n,k}\ ,
\end{eqnarray}
where the difference is that now $p(\Theta|K=k)$ is conditioned on the classical information $K=k$. However, in our case $\Theta^n$ is chosen randomly regardless of what state is prepared. Thus $p(\Theta^n|K=k)=p(\Theta^n)=2^{-n}$ and we get
\begin{eqnarray}
\halpha (X^n|\Theta^{n}K)_{\rho|\rho} &=& \frac{1}{1-\alpha} \log \sum_{k} p_k \sum_{\theta^n\in\01^n} p_{\theta^n} \sum_{x^n\in\01^n} p^\alpha_{x^n|\theta^n,k}\nonumber\\
&\geq &n\cdot\frac{1}{s}\left[1+s-\log(1+2^{s})\right]\ .
\end{eqnarray}

%%%%%%%%%%%%%%%%%%%%%%%%%%%%%%%%%%%%%%%%%%%%

\subsubsection{Step 4 : Relation to the min-entropy}

After obtaining a bound on $\halpha(X^n|\Theta^{n}K)_{\rho|\rho}$, we now link this to a bound on $\hmineps(X^n|\Theta^{n}K)_{\rho}$. It is shown in~\cite[Theorem 7]{QAEP} that for $\rho_{AB}\in \mathcal{S}(\mathcal{H}_{AB})$, $\epsilon\geq0$, and $\alpha\in(1,2]$
\begin{equation}
\hmineps (A|B)_{\rho} \geq\halpha(A|B)_{\rho|\rho} -\frac{1}{\alpha-1}\log\frac{2}{\epsilon^2}\ .
\end{equation}
Thus the smooth conditional min-entropy is lower bounded by general conditional $\alpha$-R{\'e}nyi entropies, with a correction term growing logarithmically in $1/\epsilon^2$. For the Shannon entropy ($\alpha\rightarrow1$) this term diverges, but considering $\alpha\in(1,2]$ the bound is very useful. Namely, the smooth conditional min-entropy of $X^n$ given $\Theta^{n}K$ is bounded to
\begin{equation}\label{finalbb84}
\frac{1}{n} ~\hmineps (X^n|\Theta^n K)_\rho \geq ~\frac{1}{n}H_{\alpha}(X^n|\Theta^n K)_{\rho|\rho} \geq \max_{s\in(0,1]}~ \frac{1}{s} [1+s-\log(1+2^s)] - \frac{1}{sn} \log \frac{2}{\epsilon^2}\ .
\end{equation}

Note that the maximum value of \eqref{finalbb84} is obtained for different values of $s$, as $n$ and $\epsilon$ varies.

%%%%%%%%%%%%%%%%%%%%%%%%%%%%%%%%%%%%%%%%%%%%

\section{B: Uncertainty relation for six-state measurements}\label{sixstate}

In this section, we make use of the same methods as in Appendix A. We derive an uncertainty relation for any $n$-qubit state measured independently on each qubit in six-state bases. For the single qubit version, we have to consider
\begin{eqnarray}\label{alpharenyi2}
\halpha (X|\Theta)_{\rho|\rho} &=& \frac{1}{1-\alpha} \log~P_\alpha (X|\Theta)\nonumber\\
P_\alpha (X|\Theta) &=& \tr\left[\rho_{X\Theta}^\alpha ( \id_X \otimes \rho_{\Theta})^{1-\alpha}\right]\nonumber\\
\rho_{X\Theta}&=&\frac{1}{3}\cdot\sum_{\theta,x}p_{x|\theta}\proj{x}\otimes\proj{\theta}\nonumber\\
p_{x|\theta}&=&\mathrm{tr}(N_{x|\theta}\rho)\ ,
\end{eqnarray}
with $N_{x|\theta}=\mathbf{T}^{\theta}\proj{x}\mathbf{T}^{\theta}$, and $\mathbf{T}=\frac{1}{\sqrt{2}}\begin{pmatrix}1&-i\\1&i\end{pmatrix}$ the matrix that cyclically permutes the eigenbases of Pauli $\sigma_x$, $\sigma_y$, and $\sigma_z$.

\begin{theoremAppB}\label{theorem2}
Let $\rho\in\mathcal{S}(\mathbb{C}^{2})$, and $\alpha=1+s$ with $s\in(0,1]$. Then we have for six-state measurements as in~\eqref{alpharenyi2} that
\begin{equation}\label{minimal_renyi_6}
\halpha(X|\Theta)_{\rho|\rho} \geq\frac{-1}{s} \log\left[\frac{1}{3}(1+2^{1-s})\right]\ .
\end{equation}

\begin{proof}
We evaluate the term 
\begin{eqnarray}\label{sixstP}
P_{1+s}(X|\Theta) &=& \frac{1}{3}\cdot\sum_{x \in \{ 0,1 \} }  \sum_{\theta \in \{ 0,1,2 \}} p_{x|\theta}^{1+s} \nonumber \\
%& & = \frac{1}{3} \left[ \tr(\rho |0\rangle\langle 0|)^{1+s}+\tr(\rho |1\rangle\langle 1|)^{1+s}+\tr(\rho |+\rangle\langle +|)^{1+s}+\tr(\rho |-\rangle\langle -|)^{1+s} \right] \nonumber \\
%& & = \frac{1}{3}\cdot\frac{1}{2^{1+s}} \lbrace [1+\tr(\rho Z)]^{1+s}+[1-\tr(\rho Z)]^{1+s}+[1+\tr(\rho X)]^{1+s}+[1-\tr(\rho X)]^{1+s}  \rbrace \nonumber\\
& & = \frac{1}{3}\cdot\frac{1}{2^{1+s}} \sum_{i=0}^2[(1+x_i)^{1+s}+(1-x_i)^{1+s}]\ ,
\end{eqnarray}
where $\lbrace x_0,x_1,x_2\rbrace:=\lbrace x,y,z\rbrace$ and $x_i:=\tr(\sigma_{x_i}\rho)$. Parametrizing this in terms of spherical coordinates, we write
\begin{equation}
x_0 = r\sin\phi\sin\theta, ~~~~x_1 = r\cos\phi\sin\theta, ~~~~ x_2 = r\cos\theta\ ,
\end{equation}
where $0\leq r \leq 1$, $0\leq \phi,\theta \leq \frac{\pi}{2}$. The expression~\eqref{sixstP} can be rewritten in terms of these new coordinates as
\begin{equation}
M(s,r,\phi,\theta):= \frac{1}{3}\cdot\frac{1}{2^{1+s}} \lbrace\sum_{p=0,1}\left[1+(-1)^p r\sin\phi\sin\theta\right]^{1+s}+\sum_{p=0,1}\left[1+(-1)^pr\cos\phi\sin\theta\right]^{1+s}+\sum_{p=0,1}\left[1+(-1)^p\cos\theta\right]^{1+s}\rbrace\ .
\end{equation}

Evaluating the partial differential of $Q(s,r,\phi)$ with respect to $r$ 
\begin{eqnarray}\nonumber
\frac{\partial M(s,r,\phi,\theta)}{\partial r} &=& \frac{1+s}{3}\cdot\frac{1}{2^{1+s}} \sin\theta \cdot \lbrace \sin\phi[(1+r\sin\phi\sin\theta)^{s}-(1-r\sin\phi\sin\theta)^{s}]\\
&&+\cos\phi[(1+r\cos\phi\sin\theta)^{s}-(1-r\cos\phi\sin\theta)^{s}] \rbrace\ .
\end{eqnarray}

Again we see that since in the range of $\phi,\theta$, all values of sines and cosines are positive, we obtain $\frac{\partial M(s,r,\phi,\theta)}{\partial r} \geq 0$, which implies the maximum is attained at $r=1$. Subsequently, evaluating the partial derivative
\begin{eqnarray}\nonumber
\frac{\partial M(s,1,\phi,\theta)}{\partial \phi} &=& \frac{1+s}{3}\cdot\frac{1}{2^{1+s}} \sin\theta \cdot \lbrace \cos\phi[(1+\sin\phi\sin\theta)^{s}-(1-\sin\phi\sin\theta)^{s}]\\
&&-\sin\phi[(1+r\cos\phi\sin\theta)^{s}-(1-r\cos\phi\sin\theta)^{s}] \rbrace\ ,
\end{eqnarray}
gives the points $\phi=0,~\frac{\pi}{4},~\frac{\pi}{2}$ as solutions. We continue by evaluating the second partial derivative at these points
\begin{eqnarray}\nonumber
\frac{\partial^2 M(s,1,\phi,\theta)}{\partial \phi^2}\bigg|_{\phi=0} &=& \frac{1+s}{3} \cdot \frac{1}{2^{1+s}} \cdot \sin\theta [2s\sin\theta - [(1+\sin\theta)^s-(1-\sin\theta)^s]]\\
\frac{\partial^2 M(s,1,\phi,\theta)}{\partial \phi^2}\bigg|_{\phi=\frac{\pi}{4}} &=& \frac{1+s}{3}\cdot\frac{1}{2^{s}} \cdot c^2 \cdot \lbrace s \cdot \left[(1+c)^{s-1}+(1-c)^{s-1} \right]- \frac{1}{c} \left[(1+c)^{s}-(1-c)^{s}\right] \rbrace\ ,
\end{eqnarray}
where $c=\frac{\sin\theta}{\sqrt{2}}$. By expanding in Taylor's series, the first equation is negative for $s\in(0,1]$, whereas the second equation is positive. Hence the maximum is obtained at $\phi=0$. The last step is to evaluate 
\begin{eqnarray}\nonumber
\frac{\partial M(s,1,0,\theta)}{\partial \theta} &=& \frac{1+s}{3}\cdot\frac{1}{2^{1+s}} \sin\theta \cdot \lbrace \cos\phi[(1+\sin\phi\sin\theta)^{s}-(1-\sin\phi\sin\theta)^{s}]\\
&&-\sin\phi[(1+r\cos\phi\sin\theta)^{s}-(1-r\cos\phi\sin\theta)^{s}] \rbrace\ .
\end{eqnarray}
But then this is of similar form as~\eqref{firstderphi}, and thus the maxima is obtained at $\theta=0$. Evaluating $M(s,1,0,0)$ then results in the claim
\begin{equation}
P_{1+s}(X|\theta) \leq M(s,1,0,0) = \frac{1}{3}(1+2^{1-s})\ .
\end{equation}
\end{proof}
\end{theoremAppB}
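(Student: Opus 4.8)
The plan is to prove Theorem~B.\ref{theorem2} by the same Lagrange-style optimization used for the BB84 case in Theorem~A.\ref{theorem1}, exploiting the extra symmetry of the six-state configuration. First I would reduce the quantity $P_{1+s}(X|\Theta)$ to an explicit function of the Bloch vector components $x_i = \tr(\sigma_{x_i}\rho)$, using that each of the three measurement bases contributes $\frac{1}{2^{1+s}}[(1+x_i)^{1+s}+(1-x_i)^{1+s}]$ and that the prior over bases is uniform ($p_\theta = 1/3$). The Bloch-ball constraint $x_0^2+x_1^2+x_2^2 \le 1$ is the feasible region, so I would pass to spherical coordinates $(r,\phi,\theta)$ as in the excerpt and maximize $M(s,r,\phi,\theta)$ over the closed region $r\in[0,1]$, $\phi,\theta\in[0,\pi/2]$; restricting $\phi,\theta$ to the first quadrant is justified because $M$ is even in each $x_i$.

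Next I would handle the three variables in sequence, from the "outermost" to the "innermost". Step (a): show $\partial_r M \ge 0$ throughout, since each bracketed difference $(1+t)^s - (1-t)^s \ge 0$ for $t\ge 0$ and $s>0$; hence the maximum lies on $r=1$. Step (b): on $r=1$, compute $\partial_\phi M$ and observe that, up to a positive prefactor involving $\sin\theta$, it has exactly the two-term structure of the BB84 derivative \eqref{firstderphi} in the variables $\sin\phi\sin\theta$ and $\cos\phi\sin\theta$, so its zeros are $\phi = 0, \pi/4, \pi/2$. Then use the second derivatives: at $\phi = \pi/4$ the expression is (a positive multiple of) $g(a,s)$ with $a = c = \sin\theta/\sqrt2$, which is positive by Lemma~C.\ref{positive}, so $\phi=\pi/4$ is a local minimum; at the endpoints the second derivative is $\frac{1+s}{3}\cdot\frac{\sin\theta}{2^{1+s}}\,[\,2s\sin\theta - ((1+\sin\theta)^s - (1-\sin\theta)^s)\,]$, which a short Taylor expansion in $\sin\theta$ shows is negative for $s\in(0,1]$, so the endpoints are local maxima. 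Since the endpoints $\phi=0$ and $\phi=\pi/2$ play symmetric roles (swapping $x_0 \leftrightarrow x_1$), it suffices to take $\phi = 0$. Step (c): on $r=1$, $\phi=0$ we have $x_0 = 0$, $x_1 = \sin\theta$, $x_2 = \cos\theta$, and $M(s,1,0,\theta)$ reduces to a two-basis expression of precisely the BB84 form already analyzed in Theorem~A.\ref{theorem1}; hence its maximum over $\theta\in[0,\pi/2]$ is attained at $\theta = 0$, giving the extremal Bloch vector $(0,0,1)$ — an eigenstate of one of the measurement bases, as expected.

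Finally I would evaluate $M(s,1,0,0)$: two of the three bases then give the "flat" value $\frac{1}{2^{1+s}}(1^{1+s}+1^{1+s}) = 2^{-s}$ each, while the basis aligned with the Bloch vector gives $\frac{1}{2^{1+s}}(2^{1+s}+0) = 1$, so $P_{1+s}(X|\Theta) \le \frac{1}{3}\cdot\frac{1}{2^{1+s}}(2\cdot 2 + 2^{1+s}) = \frac{1}{3}(2^{-s}\cdot 2 + 1) = \frac{1}{3}(1+2^{1-s})$. Substituting into $\halpha = \frac{1}{1-\alpha}\log P_\alpha$ with $\alpha = 1+s$ yields the bound $\halpha(X|\Theta)_{\rho|\rho} \ge -\frac{1}{s}\log\!\big[\frac{1}{3}(1+2^{1-s})\big]$, which is \eqref{minimal_renyi_6}.

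I expect the main obstacle to be Step (b): confirming the sign of the second derivatives cleanly over the whole parameter range $s\in(0,1]$ and $\theta\in[0,\pi/2]$. Recognizing the $\phi=\pi/4$ second derivative as an instance of $g(a,s)$ from Lemma~C.\ref{positive} is what makes this tractable, but one must check that the substitution $a = \sin\theta/\sqrt2 \in [0,1/\sqrt2]$ stays within the lemma's hypotheses; the endpoint case needs a careful Taylor/monotonicity argument to rule out sign changes. A secondary subtlety is being fully rigorous that, after fixing $r=1$ and $\phi=0$, the residual one-parameter problem in $\theta$ genuinely coincides with the BB84 problem and so inherits its maximizer, rather than merely resembling it. Once these are pinned down, collapsing the $n$-qubit and classical-side-information steps proceeds verbatim as in Appendix~A (the additivity Lemma~A and the reduction to $\hmineps$ via \cite{QAEP}), so no new ideas are needed there.
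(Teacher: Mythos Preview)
Your proposal is correct and follows essentially the same route as the paper's own proof: reduce $P_{1+s}$ to the Bloch-vector expression, pass to spherical coordinates, show radial monotonicity to force $r=1$, classify the $\phi$-critical points $0,\pi/4,\pi/2$ via second derivatives (invoking Lemma~C.\ref{positive} at $\phi=\pi/4$ with $a=\sin\theta/\sqrt{2}$ and a Taylor argument at the endpoints), then observe that the residual $\theta$-problem at $\phi=0$ is the BB84 problem from Theorem~A.\ref{theorem1}, yielding the maximizer $\theta=0$ and the value $\frac{1}{3}(1+2^{1-s})$. Your added remarks on symmetry (restricting to the first octant) and on checking $a\in[0,1/\sqrt{2}]\subset[0,1)$ for Lemma~C.\ref{positive} are welcome pieces of rigor that the paper leaves implicit.
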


The additivity of minimal entropy holds by using the same argument as in Step 2 of Appendix A. Namely, given a string divided into parts $A$ and $B$, where $B$ denotes a single qubit system, the uncertainty relation for $B$ holds for the state
\begin{equation}
\sigma_B =\tr_A \left[ \frac{N_{x_A|\theta_A}~\rho_{AB}~N_{x_A|\theta_A}^\dagger}{\tr[N_{x_A|\theta_A}~\rho_{AB}~N_{x_A|\theta_A}^\dagger]} \right]\ ,
\end{equation}
where $N_{x_A|\theta_A}=\mathbf{T}^{\theta_A}|x_A\rangle\langle x_A|\mathbf{T}^{\theta_A}\otimes\id_{B}$. By exactly the same arguments as in Steps 3 and 4 in Appendix A, the smooth conditional min-entropy of the string $X^n\in\01^n$ conditioned on the basis $\theta^n\in\lbrace0,2\rbrace^n$ and the classical side information $K$ can then be bounded by
\begin{equation}
\frac{1}{n} ~\hmineps (X^n|\Theta^n K)_\rho \geq ~\frac{1}{n} H_{\alpha}(X^n|\Theta^n K)_{\rho|\rho} \geq \max_{s\in(0,1]}~ \frac{-1}{s} \log\left[\frac{1}{3}(1+2^{1-s})\right]-\frac{1}{sn}\cdot\log\frac{2}{\epsilon^{2}}\ .
\end{equation}

%%%%%%%%%%%%%%%%%%%%%%%%%%%%%%%%%%%%%%%%%%%%

\section{C: Technical Lemmas}\label{technical}

\begin{lemmaAppC}\label{positive}
Given the function $g:\mathbb{R}\times\mathbb{R} \rightarrow \mathbb{R}$,
\begin{equation}
g(a,s):= s\cdot [(1+a)^{s-1}+(1-a)^{s-1}]-\frac{1}{a}\cdot [(1+a)^s-(1-a)^s]\ .
\end{equation}
Then $g(a,s)\geq 0$ for $a\in[0,1)$ and $s\in(0,1]$.
\end{lemmaAppC}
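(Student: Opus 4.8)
The plan is to expand $g(a,s)$ as a power series in $a$ and verify that every coefficient is nonnegative; the closed form is awkward to differentiate in either variable, but the signs become transparent once one passes to the binomial series.

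First dispose of the boundary $a=0$: as $a\to 0$ one has $(1+a)^{s-1}+(1-a)^{s-1}\to 2$ and $\tfrac1a[(1+a)^s-(1-a)^s]\to 2s$, so $g(0,s)=2s-2s=0$; hence assume from now on $a\in(0,1)$. For such $a$ the generalized binomial theorem applies to all four powers, absolutely and uniformly on compact subsets of $(-1,1)$, and separating even from odd powers gives $(1+a)^{s-1}+(1-a)^{s-1}=2\sum_{k\ge0}\binom{s-1}{2k}a^{2k}$ and $(1+a)^{s}-(1-a)^{s}=2\sum_{k\ge0}\binom{s}{2k+1}a^{2k+1}$. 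Dividing the second identity by $a$ and combining, one obtains
\[
g(a,s)=2\sum_{k\ge0}\left[\,s\binom{s-1}{2k}-\binom{s}{2k+1}\,\right]a^{2k}.
\]
Now use the elementary identity $\binom{s}{2k+1}=\tfrac{s}{2k+1}\binom{s-1}{2k}$, which collapses the bracket to $s\,\binom{s-1}{2k}\bigl(1-\tfrac1{2k+1}\bigr)=\tfrac{2k}{2k+1}\,s\,\binom{s-1}{2k}$. In particular the $k=0$ term vanishes, and
\[
g(a,s)=2s\sum_{k\ge1}\frac{2k}{2k+1}\binom{s-1}{2k}a^{2k}.
\]

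It remains to check that $\binom{s-1}{2k}\ge0$ for $s\in(0,1]$ and $k\ge1$. Writing $\binom{s-1}{2k}=\tfrac1{(2k)!}\prod_{j=1}^{2k}(s-j)$, each of the $2k$ factors satisfies $s-j\in(-j,\,1-j]\subseteq(-\infty,0]$ since $s\le1$ and $j\ge1$, so the product of an even number of nonpositive factors is $\ge0$. As $s>0$, $\tfrac{2k}{2k+1}>0$, and $a^{2k}\ge0$, every term in the last display is nonnegative, whence $g(a,s)\ge0$, which together with the boundary case $a=0$ proves the lemma.

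I do not expect a genuine obstacle: the only point needing a word of justification is the term-by-term manipulation of the four binomial series, which is standard on $(-1,1)$, and the sign of $\binom{s-1}{2k}$, settled above. (One could alternatively observe $g(a,s)=2a\,\tfrac{d}{da}\bigl(\tfrac{(1+a)^s-(1-a)^s}{2a}\bigr)$ and argue that the bracketed symmetric secant slope of $t\mapsto t^s$ is nondecreasing in $a$, but the series route is the shortest and the cleanest.)
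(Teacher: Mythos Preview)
Your proof is correct and is essentially the same as the paper's: both expand $(1\pm a)^{s-1}$ and $(1\pm a)^{s}$ as binomial/Taylor series, combine termwise, and observe that the surviving even-degree coefficients reduce to $2s\cdot\frac{2k}{(2k+1)!}\prod_{j=1}^{2k}(s-j)$, which is nonnegative because an even number of nonpositive factors is multiplied together. The only cosmetic difference is that you package the coefficients as $\binom{s-1}{2k}$ and invoke the identity $\binom{s}{2k+1}=\tfrac{s}{2k+1}\binom{s-1}{2k}$, whereas the paper writes out the falling factorials explicitly; the arguments are otherwise identical.
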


\begin{proof}
Since $a$ lies within the convergence radius of the function $(1\pm a)^s$, we expand the function in Taylor's series
\begin{align}
&s\cdot [(1+a)^{s-1}+(1-a)^{s-1}]-\frac{1}{a}(1+a)^s-(1-a)^s]\nonumber\\
&= 2s\cdot \left[1+\sum_{n=2,4...}\frac{(s-1)(s-2)...(s-n)}{n!}a^n\right]-\frac{1}{a} \left[2as+2\sum_{n=3,5...}\frac{s(s-1)...(s-n+1)}{n!} a^n\right]\nonumber\\
&= 2s\cdot \left[\sum_{n=2,4...}\frac{(s-1)(s-2)...(s-n)}{n!}a^n-\sum_{n=3,5...}\frac{(s-1)(s-2)...(s-n+1)}{n!} a^{n-1}\right]\nonumber\\
&= 2s\cdot \left[\sum_{n=2,4...}\frac{(s-1)(s-2)...(s-n)}{n!}a^n-\sum_{j=2,4...}\frac{(s-1)(s-2)...(s-j)}{(j+1)!} a^{j}\right]\nonumber\\
&= 2s\cdot \sum_{n=2,4...}(s-1)(s-2)...(s-n)~\frac{n}{(n+1)!}~a^n\nonumber\\
&\geq 0\ .
\end{align}
The first equality holds by a straightforward expansion of Taylor's series, the second equality by extracting $2s$ and absorbing $\frac{1}{a}$ into the second summation term, the third equality follows from redefining the summation variable $j=n-1$, and the last inequality follows because $(s-1)...(s-n)\geq 0$ for $s\in(0,1]$ and $n$ being an even integer.
\end{proof}


\begin{thebibliography}{22}
\expandafter\ifx\csname natexlab\endcsname\relax\def\natexlab#1{#1}\fi
\expandafter\ifx\csname bibnamefont\endcsname\relax
  \def\bibnamefont#1{#1}\fi
\expandafter\ifx\csname bibfnamefont\endcsname\relax
  \def\bibfnamefont#1{#1}\fi
\expandafter\ifx\csname citenamefont\endcsname\relax
  \def\citenamefont#1{#1}\fi
\expandafter\ifx\csname url\endcsname\relax
  \def\url#1{\texttt{#1}}\fi
\expandafter\ifx\csname urlprefix\endcsname\relax\def\urlprefix{URL }\fi
\providecommand{\bibinfo}[2]{#2}
\providecommand{\eprint}[2][]{\url{#2}}

\bibitem[{\citenamefont{Wehner and Winter}(2010)}]{ww:ursurvey}
\bibinfo{author}{\bibfnamefont{S.}~\bibnamefont{Wehner}} \bibnamefont{and}
  \bibinfo{author}{\bibfnamefont{A.}~\bibnamefont{Winter}},
  \bibinfo{journal}{New Journal of Physics} \textbf{\bibinfo{volume}{12}},
  \bibinfo{pages}{025009} (\bibinfo{year}{2010}).

\bibitem[{\citenamefont{Maassen and Uffink}(1988)}]{Maassen88}
\bibinfo{author}{\bibfnamefont{H.}~\bibnamefont{Maassen}} \bibnamefont{and}
  \bibinfo{author}{\bibfnamefont{J.B.M.}~\bibnamefont{Uffink}},
  \bibinfo{journal}{Physical Review Letters} \textbf{\bibinfo{volume}{60}},
  \bibinfo{pages}{1103} (\bibinfo{year}{1988}).

\bibitem[{\citenamefont{K\"{o}nig et~al.}(2009)\citenamefont{K\"{o}nig, Renner,
  and Schaffner}}]{renato:operational}
\bibinfo{author}{\bibfnamefont{R.}~\bibnamefont{K\"{o}nig}},
  \bibinfo{author}{\bibfnamefont{R.}~\bibnamefont{Renner}}, \bibnamefont{and}
  \bibinfo{author}{\bibfnamefont{C.}~\bibnamefont{Schaffner}},
  \bibinfo{journal}{IEEE Trans. Inf. Theor.} \textbf{\bibinfo{volume}{55}},
  \bibinfo{pages}{4337} (\bibinfo{year}{2009}).

\bibitem[{\citenamefont{Bennett and Brassard}(1984)}]{bb84}
\bibinfo{author}{\bibfnamefont{C.~H.} \bibnamefont{Bennett}} \bibnamefont{and}
  \bibinfo{author}{\bibfnamefont{G.}~\bibnamefont{Brassard}}, in
  \emph{\bibinfo{booktitle}{Proceedings of the IEEE International Conference on
  Computers, Systems and Signal Processing}} (\bibinfo{year}{1984}), pp.
  \bibinfo{pages}{175--179}.

\bibitem[{\citenamefont{Damg{\aa}rd et~al.}(2005)\citenamefont{Damg{\aa}rd,
  Fehr, Salvail, and Schaffner}}]{serge:bounded}
\bibinfo{author}{\bibfnamefont{I.~B.} \bibnamefont{Damg{\aa}rd}},
  \bibinfo{author}{\bibfnamefont{S.}~\bibnamefont{Fehr}},
  \bibinfo{author}{\bibfnamefont{L.}~\bibnamefont{Salvail}}, \bibnamefont{and}
  \bibinfo{author}{\bibfnamefont{C.}~\bibnamefont{Schaffner}}, in
  \emph{\bibinfo{booktitle}{Proceedings of 46th IEEE Symposium on Foundations
  of Computer Science}} (\bibinfo{year}{2005}), pp. \bibinfo{pages}{449--458}.

\bibitem[{\citenamefont{Damg{\aa}rd et~al.}(2007)\citenamefont{Damg{\aa}rd,
  Fehr, Renner, Salvail, and Schaffner}}]{serge:new}
\bibinfo{author}{\bibfnamefont{I.~B.} \bibnamefont{Damg{\aa}rd}},
  \bibinfo{author}{\bibfnamefont{S.}~\bibnamefont{Fehr}},
  \bibinfo{author}{\bibfnamefont{R.}~\bibnamefont{Renner}},
  \bibinfo{author}{\bibfnamefont{L.}~\bibnamefont{Salvail}}, \bibnamefont{and}
  \bibinfo{author}{\bibfnamefont{C.}~\bibnamefont{Schaffner}}, in
  \emph{\bibinfo{booktitle}{Proceedings of CRYPTO 2007}}
  (\bibinfo{year}{2007}), Springer LNCS, pp. \bibinfo{pages}{360--378}.

\bibitem[{\citenamefont{Wehner et~al.}(2008)\citenamefont{Wehner, Schaffner,
  and Terhal}}]{Noisy1}
\bibinfo{author}{\bibfnamefont{S.}~\bibnamefont{Wehner}},
  \bibinfo{author}{\bibfnamefont{C.}~\bibnamefont{Schaffner}},
  \bibnamefont{and} \bibinfo{author}{\bibfnamefont{B.M.}~\bibnamefont{Terhal}},
  \bibinfo{journal}{Physical Review Letters} \textbf{\bibinfo{volume}{100}},
  \bibinfo{pages}{220502} (\bibinfo{year}{2008}).

\bibitem[{\citenamefont{Schaffner et~al.}(2008)\citenamefont{Schaffner, Terhal,
  and Wehner}}]{noisy:robust}
\bibinfo{author}{\bibfnamefont{C.}~\bibnamefont{Schaffner}},
  \bibinfo{author}{\bibfnamefont{B.}~\bibnamefont{Terhal}}, \bibnamefont{and}
  \bibinfo{author}{\bibfnamefont{S.}~\bibnamefont{Wehner}},
  \bibinfo{journal}{Quantum Information \& Computation}
  \textbf{\bibinfo{volume}{9}}, \bibinfo{pages}{11} (\bibinfo{year}{2008}).

\bibitem[{\citenamefont{K\"onig et~al.}(2009)\citenamefont{K\"onig, Wehner, and
  Wullschleger}}]{noisy:new}
\bibinfo{author}{\bibfnamefont{R.}~\bibnamefont{K\"onig}},
  \bibinfo{author}{\bibfnamefont{S.}~\bibnamefont{Wehner}}, \bibnamefont{and}
  \bibinfo{author}{\bibfnamefont{J.}~\bibnamefont{Wullschleger}},
  \bibinfo{journal}{IEEE Transactions on Information Theory - To appear}
  (\bibinfo{year}{2009}), \bibinfo{note}{arXiv:0906.1030v4}.

\bibitem[{\citenamefont{Damgaard et~al.}(2007)\citenamefont{Damgaard, Fehr,
  Salvail, and Schaffner}}]{secureid}
\bibinfo{author}{\bibfnamefont{I.}~\bibnamefont{Damgaard}},
  \bibinfo{author}{\bibfnamefont{S.}~\bibnamefont{Fehr}},
  \bibinfo{author}{\bibfnamefont{L.}~\bibnamefont{Salvail}}, \bibnamefont{and}
  \bibinfo{author}{\bibfnamefont{C.}~\bibnamefont{Schaffner}},
  \bibinfo{journal}{Springer LNCS} \textbf{\bibinfo{volume}{4622}},
  \bibinfo{pages}{22} (\bibinfo{year}{2007}).

\bibitem[{\citenamefont{Berta et~al.}(2011{\natexlab{a}})\citenamefont{Berta,
  Fawzi, and Wehner}}]{qcextract}
\bibinfo{author}{\bibfnamefont{M.}~\bibnamefont{Berta}},
  \bibinfo{author}{\bibfnamefont{O.}~\bibnamefont{Fawzi}}, \bibnamefont{and}
  \bibinfo{author}{\bibfnamefont{S.}~\bibnamefont{Wehner}}
  (\bibinfo{year}{2011}{\natexlab{a}}), \bibinfo{note}{arXiv:1111.2026v2}.

\bibitem[{\citenamefont{Wehner et~al.}(2010)\citenamefont{Wehner, Curty,
  Schaffner, and Lo}}]{Curty10}
\bibinfo{author}{\bibfnamefont{S.}~\bibnamefont{Wehner}},
  \bibinfo{author}{\bibfnamefont{M.}~\bibnamefont{Curty}},
  \bibinfo{author}{\bibfnamefont{C.}~\bibnamefont{Schaffner}},
  \bibnamefont{and} \bibinfo{author}{\bibfnamefont{H.-K.} \bibnamefont{Lo}},
  \bibinfo{journal}{Physical Review A} \textbf{\bibinfo{volume}{81}},
  \bibinfo{pages}{052336} (\bibinfo{year}{2010}).

\bibitem[{\citenamefont{Schaffner}(2010)}]{secureid:practical}
\bibinfo{author}{\bibfnamefont{C.}~\bibnamefont{Schaffner}},
  \bibinfo{journal}{Phys. Rev. A} \textbf{\bibinfo{volume}{82}},
  \bibinfo{pages}{032308} (\bibinfo{year}{2010}).

\bibitem[{\citenamefont{Schaffner}(2007)}]{chris:diss}
\bibinfo{author}{\bibfnamefont{C.}~\bibnamefont{Schaffner}}, Ph.D. thesis,
  \bibinfo{school}{University of Aarhus} (\bibinfo{year}{2007}),
  \bibinfo{note}{arXiv:0709.0289v1}.

\bibitem[{\citenamefont{Tomamichel et~al.}(2010)\citenamefont{Tomamichel,
  Colbeck, and Renner}}]{Tomamichel09}
\bibinfo{author}{\bibfnamefont{M.}~\bibnamefont{Tomamichel}},
  \bibinfo{author}{\bibfnamefont{R.}~\bibnamefont{Colbeck}}, \bibnamefont{and}
  \bibinfo{author}{\bibfnamefont{R.}~\bibnamefont{Renner}},
  \bibinfo{journal}{IEEE Transactions on Information Theory}
  \textbf{\bibinfo{volume}{56}} (\bibinfo{year}{2010}).

\bibitem[{\citenamefont{Ng et~al.}(2012)\citenamefont{Ng, Joshi, Chia,
  Kurtsiefer, and Wehner}}]{noisyimpl}
\bibinfo{author}{\bibfnamefont{H. Y. N.}~\bibnamefont{Ng}},
  \bibinfo{author}{\bibfnamefont{K.~S.} \bibnamefont{Joshi}},
  \bibinfo{author}{\bibfnamefont{C.~M.} \bibnamefont{Chia}},
  \bibinfo{author}{\bibfnamefont{C.}~\bibnamefont{Kurtsiefer}},
  \bibnamefont{and} \bibinfo{author}{\bibfnamefont{S.}~\bibnamefont{Wehner}}
  (\bibinfo{year}{2012}), \bibinfo{note}{arXiv:1205.3331v1}.

\bibitem[{\citenamefont{Wehner and Winter}(2008)}]{ww:cliffordUR}
\bibinfo{author}{\bibfnamefont{S.}~\bibnamefont{Wehner}} \bibnamefont{and}
  \bibinfo{author}{\bibfnamefont{A.}~\bibnamefont{Winter}},
  \bibinfo{journal}{Journal of Mathematical Physics}
  \textbf{\bibinfo{volume}{49}}, \bibinfo{pages}{062105}
  (\bibinfo{year}{2008}).

\bibitem[{\citenamefont{Bosyk et~al.}(2012)\citenamefont{Bosyk, Portesi, and
  Plastino}}]{renyiArxiv}
\bibinfo{author}{\bibfnamefont{G.~M.} \bibnamefont{Bosyk}},
  \bibinfo{author}{\bibfnamefont{M.}~\bibnamefont{Portesi}}, \bibnamefont{and}
  \bibinfo{author}{\bibfnamefont{A.}~\bibnamefont{Plastino}},
  \bibinfo{journal}{Physical Review A} \textbf{\bibinfo{volume}{85}},
  \bibinfo{pages}{012108} (\bibinfo{year}{2012}).

\bibitem[{\citenamefont{Tomamichel et~al.}(2009)\citenamefont{Tomamichel,
  Colbeck, and Renner}}]{QAEP}
\bibinfo{author}{\bibfnamefont{M.}~\bibnamefont{Tomamichel}},
  \bibinfo{author}{\bibfnamefont{R.}~\bibnamefont{Colbeck}}, \bibnamefont{and}
  \bibinfo{author}{\bibfnamefont{R.}~\bibnamefont{Renner}},
  \bibinfo{journal}{IEEE Transactions on Information Theory}
  \textbf{\bibinfo{volume}{55}}, \bibinfo{pages}{5840} (\bibinfo{year}{2009}).

\bibitem[{\citenamefont{Berta et~al.}(2010)\citenamefont{Berta, Christandl,
  Colbeck, Renes, and Renner}}]{Berta09}
\bibinfo{author}{\bibfnamefont{M.}~\bibnamefont{Berta}},
  \bibinfo{author}{\bibfnamefont{M.}~\bibnamefont{Christandl}},
  \bibinfo{author}{\bibfnamefont{R.}~\bibnamefont{Colbeck}},
  \bibinfo{author}{\bibfnamefont{J.~M.} \bibnamefont{Renes}}, \bibnamefont{and}
  \bibinfo{author}{\bibfnamefont{R.}~\bibnamefont{Renner}},
  \bibinfo{journal}{Nature Physics} \textbf{\bibinfo{volume}{6}},
  \bibinfo{pages}{659} (\bibinfo{year}{2010}).

\bibitem[{\citenamefont{Coles et~al.}(2011)\citenamefont{Coles, Yu, and
  Zwolak}}]{Coles11}
\bibinfo{author}{\bibfnamefont{P.~J.} \bibnamefont{Coles}},
  \bibinfo{author}{\bibfnamefont{L.}~\bibnamefont{Yu}}, \bibnamefont{and}
  \bibinfo{author}{\bibfnamefont{M.}~\bibnamefont{Zwolak}}
  (\bibinfo{year}{2011}), \bibinfo{note}{arXiv:1105.4865v2}.

\bibitem[{\citenamefont{Berta et~al.}(2011{\natexlab{b}})\citenamefont{Berta,
  Brandao, Christandl, and Wehner}}]{entCost}
\bibinfo{author}{\bibfnamefont{M.}~\bibnamefont{Berta}},
  \bibinfo{author}{\bibfnamefont{F.}~\bibnamefont{Brandao}},
  \bibinfo{author}{\bibfnamefont{M.}~\bibnamefont{Christandl}},
  \bibnamefont{and} \bibinfo{author}{\bibfnamefont{S.}~\bibnamefont{Wehner}}
  (\bibinfo{year}{2011}{\natexlab{b}}), \bibinfo{note}{arXiv:1108.5357}.

\end{thebibliography}
\end{document}